\definecolor{pgreen}{RGB}{84, 129, 102}
\definecolor{porange}{RGB}{199, 103, 42}
\renewcommand{\paragraph}[1]{\addcontentsline{toc}{section}{#1}\emph{#1.}---}
\newcommand{\sn}[1]{sn\left( #1 \right)}
\newcommand{\eye}{\mathbf{1}}
\newcommand{\hilb}{H}
\renewcommand{\tr}[1]{\text{tr}\left( #1 \right)}
\newcommand{\ptr}[2]{\text{tr}_{#1} \left( #2 \right)}
\newcommand{\ptrb}[2]{\text{tr}_{#1} \left[ #2 \right]}
\DeclareMathOperator{\tmin}{\otimes_{\text{min}}}
\DeclareMathOperator{\tmax}{\otimes_{\text{max}}}
\begin{document}
\title{Complete hierarchy for high-dimensional steering certification}

\author{Carlos de Gois}
\affiliation{Naturwissenschaftlich-Technische Fakult\"{a}t, Universit\"{a}t Siegen, Walter-Flex-Stra\ss e 3, 57068 Siegen, Germany}

\author{Martin Pl\'{a}vala}
\affiliation{Naturwissenschaftlich-Technische Fakult\"{a}t, Universit\"{a}t Siegen, Walter-Flex-Stra\ss e 3, 57068 Siegen, Germany}

\author{Ren\'{e} Schwonnek}
\affiliation{Institut f\"{u}r Theoretische Physik, Leibniz Universit\"{a}t Hannover, Appelstra\ss e 2, 30167, Hannover, Germany}

\author{Otfried G\"{u}hne}
\affiliation{Naturwissenschaftlich-Technische Fakult\"{a}t, Universit\"{a}t Siegen, Walter-Flex-Stra\ss e 3, 57068 Siegen, Germany}

\begin{abstract}
    High-dimensional quantum steering can be seen as a test for the dimensionality of entanglement, 
    where the devices at one side are not characterized. As such, it is an important component in 
    quantum informational protocols that make use of high-dimensional entanglement. Although it has 
    been recently observed experimentally, the phenomenon of high-dimensional steering is lacking 
    a general certification procedure. We provide necessary and sufficient conditions to 
    certify the entanglement dimension in a steering scenario. These conditions are stated in terms 
    of a hierarchy of semidefinite programs, which can also be used to quantify the phenomenon using 
    the steering dimension robustness. To demonstrate the practical viability of our method, we 
    characterize the dimensionality of entanglement in steering scenarios prepared with maximally 
    entangled states measured in mutually unbiased bases. Our methods give significantly stronger bounds on the noise robustness necessary to experimentally certify high-dimensional entanglement.
\end{abstract}

\maketitle

\paragraph{Introduction}
Controlling increasingly higher-dimensional quantum systems is one of the keys that can 
unlock the advantage of quantum technologies over classical predecessors. Indeed, a central 
promise of quantum computing is an improved scaling of the required qubits in comparison to 
classical bits, with a similar perspective applying, e.g., for quantum metrology or the capacity 
of quantum communication channels. On the other side, employing high-dimensional systems can improve
the noise robustness of quantum information protocols and experiments \cite{vertesi-qudit-nonlocality,aubrun2022monogamy,ecker-entanglement-distribution,
zhu-high-dimensional-photonic-entanglement,skrzypczyk2015loss,marciniak2015unbounded,qu2022retrieving,miklin2022exponentially}, and by this the central bottleneck for applications like quantum cryptography 
or a quantum network can be removed.

In any of the above applications, the genuine use of high-dimensionality demands the ability 
of creating and controlling entangled quantum states with a high Schmidt number. In operational 
terms, this number asks for the minimal local Hilbert space dimension $k$ that two parties, 
Alice and Bob, have to possess for holding shares of an entangled state $\rho_{AB}$.
    
Certifying this number in the context of the one-side device-independent setting of a 
steering experiment [see Fig.~\ref{fig:steering}] is a task that recently gained a significant amount 
of attention \cite{designolle-highdim-steering,designolle2022robust,qu2022robust,jones2022equivalence}. 
Here we assume a situation in which only one party, say Bob, has the ability to fully characterise 
his local quantum system. The specifics of the other party, Alice, are kept hidden. The only way 
in which Alice can interact with her system is by applying black-box measurements, of which she 
can only control an input $x$ and observe a corresponding output $a$.  
    
Alice's task in this situation is to find suitable black-box measurements and an initial shared
quantum state $\rho$ that allows her to convince Bob that they hold a state with a 
Schmidt number at least $k$. This task can be seen as a fundamental building block 
for the verification of quantum hardware, since a successful Alice will in the same 
run also prove her ability to control and non-trivially manipulate quantum states on 
a $k$ dimensional Hilbert space. 

    \begin{figure}\label{fig:intro}
        \centering
        \subfigure[Steering experiment]{\label{fig:steering}\includegraphics[width=.9\columnwidth]{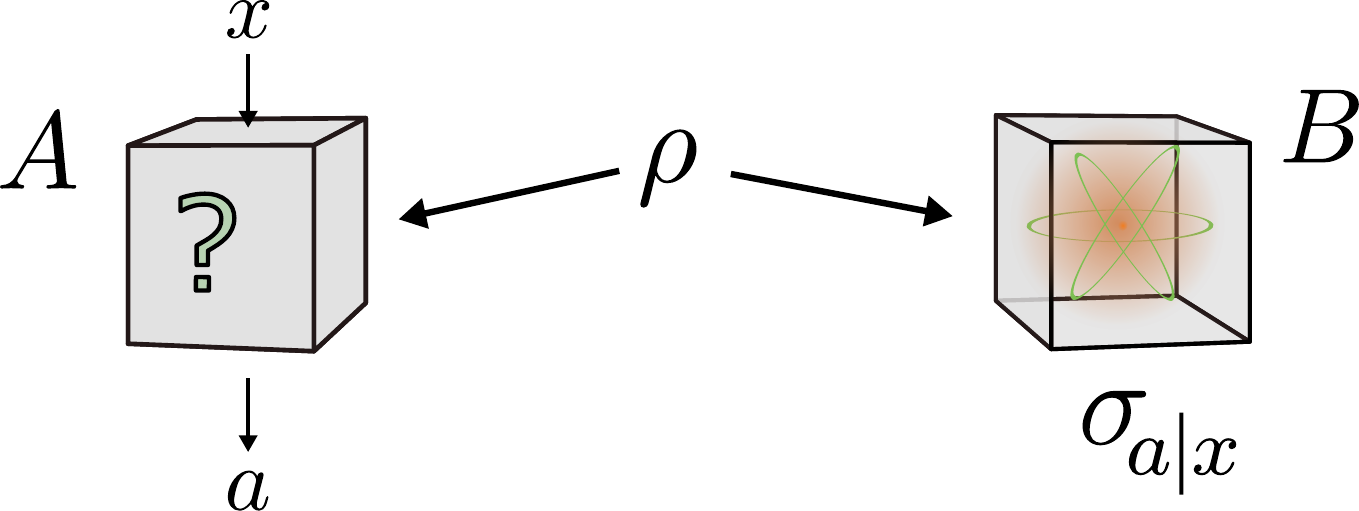}}
        
        \subfigure[$k$-preparable assemblages]{\label{fig:k-preparable-sets}\includegraphics[width=.7\columnwidth]{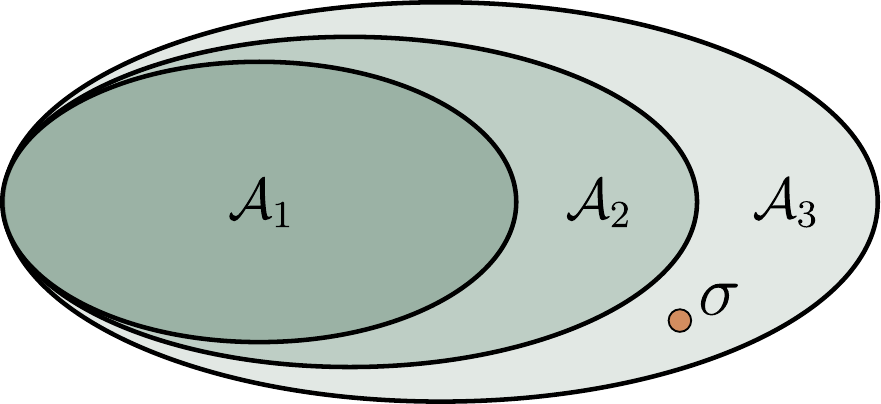}}
        \caption{In a steering experiment, Alice performs local measurements on a bipartite quantum state 
        $\rho$ shared between her and Bob. This results in a set of conditional states on Bob's side [a 
        so-called state assemblage $\sigma$, see Eq.~\eqref{eq:assemblage-elements}], which can only be 
        steerable if the shared state $\rho$ were entangled. The central task in high-dimensional steering 
        is to use an assemblage not only to witness entanglement in $\rho$, but to also quantify it. We say 
        an assemblage is $k$-preparable assemblage (and write that $\sigma \in \mathcal{A}_k$) if it 
        can be prepared from a state $\rho$ with Schmidt number smaller or equal to $k$.}
    \end{figure}

    
    Despite its fundamental importance, the certification of high-dimensional steering can currently only 
    be done for a very restricted set of cases. The usual tool employed here are linear steering witnesses, 
    which typically come with the drawback that they are always designed around a given set of scenario 
    and resources that one aims to characterize. They usually only perform well when the corresponding 
    black-box setting is applied, but tend to fail otherwise.  For example, albeit in 
    Ref.~\cite{designolle-highdim-steering} tight witnesses are derived, they are limited to pairs of 
    measurements. Their methods were later extended to more settings, but the obtained witnesses 
    are quite weak \cite{designolle2022robust,qu2022robust}. 
    
    In this paper, we provide a general and complete hierarchy of semidefinite programs for high-dimensional steering certification. The hierarchy holds for any number of measurement outcomes and measurement settings and it is always complete, meaning that for a given state assemblage on Bob's side one can in principle exactly determine the dimension of the entanglement that the assemblage certifies. We use the hierarchy to numerically obtain significantly improved bounds on the noise robustness of assemblages that certify entanglement dimension $k > 2$.
    
\paragraph{High-dimensional steering}
    In a steering experiment, Alice attempts to ``steer'' Bob's system by performing local measurements on her share of a quantum state $\rho$. Her measurements are uncharacterised, and her only input to the experiment is a label $x \in \{1, \ldots, X\}$, indicating a measurement choice. We denote her measurements by $\mathcal{M}_x = \{ M_{a \vert x} \}$, where $M_{a \vert x} \geq 0$ are the effects and $\sum_a M_{a \vert x} = \eye$.
    After Alice measures and announces her outcome to Bob, Bob's state ends up being
    \begin{equation}
        \sigma_{a \vert x} = \text{tr}_A \left[ \rho \left( M_{a \vert x} \otimes \eye_B \right) \right] .
        \label{eq:assemblage-elements}
    \end{equation}
    The collection of subnormalised states $\sigma = \{ \sigma_{a \vert x} \}$ is called an assemblage. Since Alice cannot communicate with Bob, we require the assemblages to be nonsignalling: $\sum_a \sigma_{a \vert x} = \sum_a \sigma_{a \vert x^\prime}$, for any choice of measurements $x$ and $x^\prime$. An assemblage is said to demonstrate steering if it cannot be explained by means of a local hidden state (\textsc{lhs}) model \cite{wiseman-steering}
    \begin{equation}
        \sigma_{a \vert x} = \int_\Lambda d\lambda p(\lambda) p(a \vert x, \lambda) \sigma_\lambda ,
        \label{eq:lhs-model}
    \end{equation}
    where the states $\sigma_\lambda \in \mathcal{B}(\hilb_B)$ are local to Bob, and $\lambda$ is a latent, classical variable correlating Alice's and Bob's devices. Whenever the shared state $\rho$ is separable, that is, whenever it can be written as $\rho = \sum_\lambda p(\lambda) \rho^A_\lambda \otimes \rho^B_\lambda$ for some local states $\rho^A_\lambda$ and $\rho^B_\lambda$, all assemblages $\sigma$ prepared from $\rho$ admit an \textsc{lhs} model. Therefore, any steerable $\sigma$ certifies entanglement of the shared state.
    
    Entanglement, though, comes in many forms. In particular, it can be quantified \cite{vedral-quantifying-entanglement}, but steerability does not provide insights into how entangled $\rho$ is. An object of recent discussion is whether, and how, can we use steering experiments to not only certify, but also quantify entanglement.     The Schmidt number \cite{terhal-schmidt-number} (also called ``entanglement dimension'') is a popular quantifier in this and other correlation scenarios, but other concepts exist \cite{kraft2018characterizing}.
    
    Recall that a pure bipartite state $\ket{\psi}$ has Schmidt rank $k$ if its Schmidt decomposition has $k$ terms, $\ket{\psi} = \sum_{i=1}^k \nu_i \ket{i,i}$. Extending this definition to mixed states, we say that a state $\rho$ has Schmidt number $\sn{\rho} = k$ if (i) for any decomposition $\rho = \sum_i p_i \dyad{\psi_i}{\psi_i}$, at least one of the vectors $\ket{\psi_i}$ has Schmidt rank at least $k$, and (ii) there exists a decomposition of $\rho$ with all vectors $\ket{\psi_i}$ having Schmidt rank at most $k$. We define the sets $\mathcal{S}_k = \{ \rho \mid \sn{\rho} \leq k \}$, and notice that they are convex, their extremal points are pure states, and $\mathcal{S}_{k - 1} \subset \mathcal{S}_{k}$ for all $k \geq 2$, where $\mathcal{S}_1$ is the set of separable states.
    
    Building on top of it, we say that an assemblage $\sigma$ is $k$-preparable if it can be prepared by local measurements on a state $\rho \in \mathcal{S}_k$ and we use the notation $\sn{\sigma} \leq k$ to mean so, i.e., $\sn{\sigma} \leq k$ if and only if $\sigma_{a \vert x} = \text{tr}_A \left[ \rho \left( M_{a \vert x} \otimes \eye_B \right) \right]$ for some $\rho \in \mathcal{S}_k$. Furthermore, we define $\mathcal{A}_k = \{ \sigma \mid \sn{\sigma} \leq k \}$ as the set of all assemblages that can be prepared with states in $\mathcal{S}_k$. The sets $\mathcal{A}_k$ exhibit a nested structure $\mathcal{A}_{k - 1} \subset \mathcal{A}_k$, for any $k \geq 2$ \cite{designolle-highdim-steering}.
    
    Naturally, the central question in high-dimensional steering is how to characterise these sets. In practice, given an assemblage, we want to be able to certify that it is not $k$-preparable.

\paragraph{Main result}
    Suppose $\sigma = \{\sigma_{a \vert x}\}$ is an assemblage in $\mathcal{A}_k$. Then, it can be obtained by means of a state $\rho$ with Schmidt number $k$. Any such state can be seen as coming from a separable operator in an extended space, where an entangling projection was made. More precisely, we extend Alice's ($A$) and Bob's ($B$) subsystems with $k$-dimensional auxiliary spaces $A^\prime$ and $B^\prime$ and define the projection $\Pi_k = \eye \otimes \sum_{i=1}^k \ket{i,i}_{A^\prime B^\prime}$. We call $\Pi_k$ an entangling projection since it can be seen as an unnormalised maximally entangled (\textsc{me}) state $\ket{\phi^+_d} = \nicefrac{1}{\sqrt{d}}\sum_{i=1}^{d} \ket{i,i}$. Then, any pure state $\ket{\psi} = \sum_{i=1}^k \eta_i \ket{i_A, i_B}$ can be expressed as $\ket{\psi} = \Pi^\dag_k \left( \ket{i^\prime_{AA^\prime}, i^\prime_{B^\prime B}} \right)$ if we set
    \begin{equation}
        \ket{i^\prime_{AA^\prime}} = \sum_{i=1}^k \ket{i_A, i_{A^\prime}},\,\, \ket{i^\prime_{B^\prime B}} = \sum_{i=1}^k \eta_i \ket{i_{B^\prime}, i_B} .
    \end{equation}
    Consequently, for mixed states,
    \begin{equation}
        \rho = \Pi^\dag_k \big( \underbrace{ \sum_i p_i \rho_{AA^\prime}^i \otimes \rho_{B^\prime B}^i }_{\Omega} \big) \Pi_k .
        \label{eq:lifting-down}
    \end{equation}

    Therefore, any state $\rho$ with Schmidt number $k$ can be associated with an operator $\Omega$ which is separable w.r.t.\@ the $AA^\prime/B^\prime B$ partition, at the expense of an entangling projection $\Pi_k$ between $A^\prime$ and $B^\prime$ (Fig.\@ \ref{fig:lifting}). This fact has already been noted \cite{hulpke-liftings} and used to derive a Schmidt number certification hierarchy for entanglement \cite{weilenmann-faithful-entanglement}.
    
    \begin{figure}
        \centering
        \includegraphics[width=.95\columnwidth]{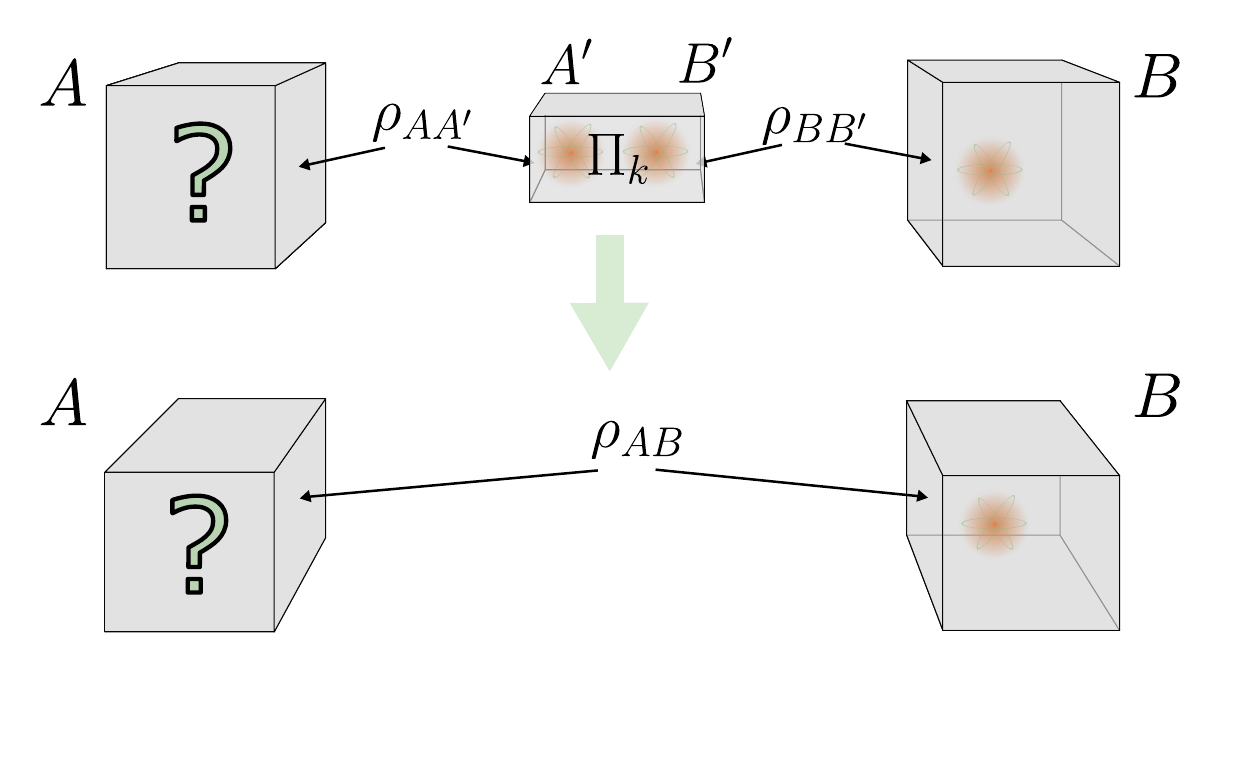}
        \caption{Any state $\rho$ with Schmidt number $k$ acting on $\mathcal{B}(\hilb_A \otimes \hilb_{B})$ is equivalent to some separable operator $\sum_i p_i \rho_{AA^\prime}^i \otimes \rho_{B^\prime B}^i$ in an extended Hilbert space, which is then projected with $\Pi_k = \eye \otimes \sum_{i=1}^k \ket{i,i}_{A^\prime B^\prime}$ according to Eq.~\eqref{eq:lifting-down}.}
        \label{fig:lifting}
    \end{figure}

    With this in mind, when $\rho$ has Schmidt number $k$ we can recast the r.h.s.\@ of Eq.~\eqref{eq:assemblage-elements} as
    \begin{equation}
        \begin{split}
            \Pi^\dag_k& \left[ \sum_i p_i \text{tr}_{A} \left( \rho_{AA^\prime}^i \left( M_{a \vert x} \otimes \eye_{A^\prime} \right)  \right) \otimes \rho_{B^\prime B}^i \right] \Pi_k \\
            &= \Pi_k^\dag \bigg[ \underbrace{ \sum_i p_i \tau_{A^\prime}^{i,a,x} \otimes \rho_{B^\prime B}^i}_{\Omega_{a \vert x}} \bigg] \Pi_k .\label{eq:omega-operators}
        \end{split}
    \end{equation}
    Here, the set of $\tau_{A^\prime}^{i,a,x}$ operators has the structure of an assemblage and in particular the nonsignalling condition $\sum_{i,a} \tau_{A^\prime}^{i,a,x} = \sum_{i,a} \tau_{A^\prime}^{i,a,x^\prime}, \,\forall x \neq x^\prime$ holds.
    
    By this construction, an assemblage $\{ \sigma_{a \vert x} \}$ is $k$-preparable if and only if there are operators $\Omega_{a \vert x} \in \mathcal{B}\left( \hilb_k \otimes \hilb_k \otimes \hilb_B \right)$ such that $\sigma_{a \vert x} = \Pi_k^\dag \Omega_{a \vert x} \Pi_k$, where $\Omega_{a \vert x} = \sum_i p_i \tau_{A^\prime}^{i,a,x} \otimes \rho_{B^\prime B}^i$ and $\{ \tau_{A^\prime}^{i,a,x} \}$ is nonsignalling.
    
    Although this provides an insight into the structure of high-dimensional steering assemblages, it does not evidence a way of determining the existence of the $\Omega_{a \vert x}$. As it turns out, this can be solved by means of a complete hierarchy of semidefinite tests, which can be seen as a generalisation of the symmetric extensions criterion for entanglement certification \cite{dps-hierarchy}.
    
    Given $N \geq 1$, $\Omega_{a \vert x}$ is said to have a nonsignalling symmetric extension of order $N$ if there exists an operator $\Xi_{a \mid x} \in \mathcal{B}\left( \hilb_{A^\prime} \otimes \hilb_{B^\prime B}^{\otimes N} \right)$ such that
    \begin{subequations}\label{eq:ns-symmetric-extension}
    \begin{align}
        &\quad \ptr{(B^\prime B)_i \,:\, i \neq m}{\Xi_{a \mid x}} = \Omega_{a \vert x}, \;\forall m \in \{1, \ldots, N\}, \\
        &\quad \sum_a \Xi_{a \mid x} = \sum_a \Xi_{a \mid x^\prime} ,\,\forall x, x^\prime .
    \end{align}
    \end{subequations}
    Here, the first condition ensures that when all but one of the systems in $B^\prime B$ are traced out, $\Omega_{a \vert x}$ is recovered, while the second condition enforces nonsignalling in the extensions.
    
    With the following theorem, we link the existence of symmetric extensions with that of the $\Omega_{a \vert x}$ operators described in Eq.~\eqref{eq:omega-operators}.

    \begin{restatable}{thm}{symmetricextensions}
    \label{thm:symmetric-extensions}
        An assemblage $\{ \sigma_{a \vert x} \}$ is $k$-preparable if and only if there are corresponding operators $\Omega_{a \vert x} \in \mathcal{B} (\hilb_{A^\prime} \otimes \hilb_{B^\prime B} )$ such that $H_{A^\prime}$, $H_{B^\prime}$ are of dimension $k$ and  $\Omega_{a \vert x}$ has a nonsignalling symmetric extension [Eq.~\eqref{eq:ns-symmetric-extension}] to any order $N$.
    \end{restatable}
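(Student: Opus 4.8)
The plan is to prove the two directions separately, leaning on the characterisation already derived before the statement: $\sigma$ is $k$-preparable if and only if there exist operators $\Omega_{a \vert x} = \sum_i p_i \tau_{A^\prime}^{i,a,x} \otimes \rho_{B^\prime B}^i$, separable across the $A^\prime/B^\prime B$ cut with a common $B^\prime B$-ensemble and a nonsignalling family $\{\tau_{A^\prime}^{i,a,x}\}$, such that $\sigma_{a \vert x} = \Pi_k^\dagger \Omega_{a \vert x} \Pi_k$. The theorem then reduces to showing that this separable structure is equivalent to $\Omega_{a \vert x}$ admitting nonsignalling symmetric extensions to every order $N$. This is the assemblage analogue of the Doherty--Parrilo--Spedalieri characterisation of separability through symmetric extensions.

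For the ``only if'' direction I would construct the extensions explicitly. Given the separable form of $\Omega_{a \vert x}$, I set $\Xi_{a \vert x} = \sum_i p_i \tau_{A^\prime}^{i,a,x} \otimes (\rho_{B^\prime B}^i)^{\otimes N}$ on $\hilb_{A^\prime} \otimes \hilb_{B^\prime B}^{\otimes N}$. Since each $\rho_{B^\prime B}^i$ is normalised, tracing out all but one copy of $B^\prime B$ returns $\Omega_{a \vert x}$, verifying the marginal condition for every $m$; the operator is manifestly positive and permutation-symmetric on the $B^\prime B$ registers. For the nonsignalling condition I use that $\sum_a \tau_{A^\prime}^{i,a,x} = \ptr{A}{\rho^i_{AA^\prime}}$ is independent of $x$, a direct consequence of $\sum_a M_{a \vert x} = \eye$, so that $\sum_a \Xi_{a \vert x}$ does not depend on $x$ either. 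Hence extensions of all orders exist.

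The hard direction is the converse, i.e.\@ the completeness of the hierarchy: from nonsignalling symmetric extensions of every order I must recover the separable decomposition. The central tool here is the quantum de Finetti theorem. First I symmetrise each order-$N$ extension $\Xi_{a \vert x}$ over permutations of the $N$ copies of $B^\prime B$, which preserves positivity, the marginal condition, and nonsignalling. The family then shares a common, $x$-independent total $R^{(N)} = \sum_a \Xi_{a \vert x}$, a permutation-symmetric operator on the $B^\prime B$ registers that is extendible to arbitrarily many copies. Taking $N \to \infty$ and applying de Finetti with the $A^\prime$ register carried along as a reference represents the single-copy marginals as mixtures of i.i.d.\@ states on $B^\prime B$, giving $\Omega_{a \vert x} = \int d\mu(\rho)\, \omega_{A^\prime}^{a,x}(\rho) \otimes \rho_{B^\prime B}$ with one measure $\mu$ common to all $(a,x)$. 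Finite-dimensionality and Carath\'eodory's theorem then discretise this into $\Omega_{a \vert x} = \sum_i p_i \tau_{A^\prime}^{i,a,x} \otimes \rho_{B^\prime B}^i$, with the nonsignalling of the $\tau$ family inherited from the second extension condition.

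The main obstacle I anticipate is precisely this de Finetti step: I must carry the $A^\prime$ reference system and the outcome/setting conditioning through the theorem so that all $\Omega_{a \vert x}$ are decomposed against one and the same ensemble $\{p_i, \rho^i_{B^\prime B}\}$, rather than a separate ensemble for each $(a,x)$; this is what forces the use of the common total $R^{(N)}$ and a shared de Finetti measure. I also need to track that the recovered $\rho^i_{B^\prime B}$ act on a $k$-dimensional $B^\prime$ and the $\tau_{A^\prime}^{i,a,x}$ on a $k$-dimensional $A^\prime$, so that the projection $\Pi_k$ reconstructs a genuine Schmidt-number-$k$ state and $\sigma_{a \vert x} = \Pi_k^\dagger \Omega_{a \vert x} \Pi_k$ is indeed $k$-preparable. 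Positivity of the extensions, $\Xi_{a \vert x} \geq 0$, is essential throughout the converse, since without it extendibility alone would not imply separability.
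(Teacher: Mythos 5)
Your proposal is correct in outline, but it proves the hard direction by a genuinely different route than the paper. The paper does not run a de Finetti argument itself: it invokes Theorem 1 of Ref.~\cite{aubrun2022monogamy}, a \textsc{dps}-type extendibility theorem for general proper cones, instantiated with $C_1$ the positive-semidefinite cone (the factor that gets copied, with the usual trace as the generalized trace) and $C_2$ the cone of multiples of nonsignalling assemblages $\{\lambda \sigma_{a \vert x}\}$. Because nonsignalling is built into the very definition of $C_2$, the conclusion $\Omega_{a\vert x} \in C_1 \tmin C_2$ immediately yields $\Omega_{a \vert x} = \sum_i p_i \tau^{i,a,x} \otimes \rho^i$ with each $\{\tau^{i,a,x}\}_{a,x}$ itself an assemblage, i.e.\@ the per-$i$ nonsignalling needed to reconstruct a Schmidt-number-$k$ state (via a \textsc{ghjw}-type dilation with a classical flag for $i$) comes for free. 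In your approach this is exactly the delicate point: after applying the quantum de Finetti theorem with $A^\prime$ and the $(a,x)$ registers carried as a reference, the $x$-independence of $\sum_a \tau^{i,a,x}$ for ($\mu$-almost) every component does \emph{not} follow merely from the $x$-independence of the total $R^{(N)}$, since decompositions of a fixed operator against a given ensemble are not unique; you need the essential uniqueness of the barycentric (St\o rmer/Hudson--Moody) decomposition at $N = \infty$, where distinct i.i.d.\@ states $\rho^{\otimes \infty}$ are separated by tail observables, together with a compactness argument to pass from extensions of every finite order to a single infinite extension. Both repairs are standard, and you explicitly flagged the issue, so your route goes through and has the merit of being self-contained, resting only on the classic de Finetti theorem plus Carath\'eodory; what the paper's cone formulation buys is that precisely these bookkeeping steps --- the shared ensemble, the per-component nonsignalling, and working directly from finite-$N$ extensions --- are absorbed into the general theorem. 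Your construction for the easy direction, $\Xi_{a \mid x} = \sum_i p_i \tau_{A^\prime}^{i,a,x} \otimes ( \rho_{B^\prime B}^i )^{\otimes N}$, is identical to the paper's.
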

    \begin{proof}
    The ``if'' condition, essential for the applications described ahead, is easily seen to be satisfied by taking $\Xi_{a \mid x} = \sum_i p_i \tau_{A^\prime}^{i,a,x} \otimes \left( \rho_{B^\prime B}^i \right)^{\otimes N}$, where we, without loss of generality, assumed that $\tr{\rho_{B^\prime B}^i} = 1$. We postpone the other direction of the proof to the Appendix.
    \end{proof}

    Theorem \ref{thm:symmetric-extensions} tells us that, to test if an assemblage $\{ \sigma_{a \vert x} \}$ is not $k$-preparable, we can check whether there exists a nonsignalling symmetric extension of $\Omega_{a \vert x}$ to some order $N$. Thus, for some given assemblage $\left\{ \sigma_{a \vert x} \right\}$, Schmidt number $k$ and hierarchy level $N$, we must search for extensions $\left\{ \Xi_{a \vert x} \right\}$ that satisfy the constraints in Eq.~\eqref{eq:ns-symmetric-extension} with $\sigma_{a \vert x} = \Pi_k^\dag \Omega_{a\vert x} \Pi_k$.
    
    This can be done with semidefinite programming. If the program is unfeasible (i.e., the symmetric extension does not exist), then $\{ \sigma_{a \vert x} \}$ is not $k$-preparable. Otherwise, the test is inconclusive and we can proceed to test a higher $N$. This hierarchy is complete in the sense that unfeasibility will eventually occur for all assemblages which are not $k$-preparable, but never for $k$-preparable ones.

\paragraph{Certifying high-dimensional steering with maximally entangled states and \textsc{mub} measurements}
    Consider, as an example, the assemblage arising from a set of $X$ $d$-dimensional mutually unbiased bases (MUB) measurements \cite{durt-mubs} acting on the maximally entangled state $\ket{\phi^+_d} = \nicefrac{1}{\sqrt{d}}\sum_{i=1}^{d} \ket{i,i}$,
    \begin{equation}
        \sigma = \Big\{ \left\{ \ptrb{A}{\dyad{\phi^+_d} \cdot \left( M_{a \vert x} \otimes \eye_B \right)} \right\}_{a = 1}^{d} \Big\}_{x = 1}^{X} .
    \end{equation}
    By mixing $\sigma$ with the white noise assemblage $\sigma_\eye$ (the one with elements $\nicefrac{\eye}{d^2}$) at a visibility $\eta$, we get a new assemblage with elements $\sigma_{a \vert x}(\eta) = \eta \sigma_{a \vert x} + (1 - \eta) {\eye}/{d^2}$. Since $\sigma_\eye$ is always a feasible point in the program described above, one may rewrite it as a maximisation on $\eta$. Any solution $\eta^* < 1$ to this maximisation problem certifies that the assemblage $\{ \sigma_{a \vert x} \}$ does not have nonsignalling symmetric extensions and therefore that $\sn{\sigma} > k$.
    
    However, even for reasonable values of the putative Schmidt number $k$ and hierarchy level $N$, the size of the program increases superexponentially. To mitigate this blow up, one can make use of the symmetries in the problem.
    
    First, observe that if the extension is of the form $\Xi_{a \mid x} = \sum_i p_i \tau_{A^\prime}^{i,a,x} \otimes \left( \rho_{B^\prime B}^i \right)^{\otimes N}$ then we can, without loss of generality, assume that $\rho_{B^\prime B}^i$ are pure states and thus
    $\Xi_{a \vert x}$ acts on $S = \hilb_{A^\prime} \otimes \text{Sym}_N( B^\prime B )$, where $\text{Sym}_N( B^\prime B )$ is the symmetric subspace of $\mathcal{B}( \hilb_{B^\prime B}^{\otimes N} )$. This space is of dimension $d_S = d_{A^\prime} \times \binom{k d_B + N - 1}{k d_B - 1}$ \cite{watrous}, considerably reducing the number of variables in our problem. To put this observation to use, notice that, from the projector
    \begin{equation}
        \Pi_S = \eye_{A^\prime} \otimes \frac{1}{n!} \sum_{\pi \in S_n} P_\pi ,
    \end{equation}
    where $S_N$ is the $N$-elements permutation group, one may construct a matrix $P_S$ whose $d_S$ rows span the symmetric subspace. With that in hand, we can start the whole ordeal with a $d_S \times d_S$ matrix ${\Xi}^\prime_{a \vert x}$ and substitute $\Xi_{a \vert x}$ for $P_S^\dag \Xi^\prime_{a \mid x} P_S$ in all constraints. More than that, due to the permutational symmetry, all partial traces need only be evaluated on one of the $B^\prime B$ subsystems. Therefore, we not only reduce the dimension on the optimisation variables, but also decrease the number of constraints.
    
    Figure \ref{fig:results} shows numerical results obtained with an implementation of the method hereby described. We additionally made use of positive partial transpose constraints --- which are known to accelerate convergence in the symmetric extensions hierarchy \cite{navascues2009power} --- and explicitly enforced that $\tr{ \Xi_{a \mid x} } = k \tr{\sigma_{a \vert x}}$, which also leads to better results.
    
    \begin{figure}
        \centering
        \includegraphics[width=.99\columnwidth]{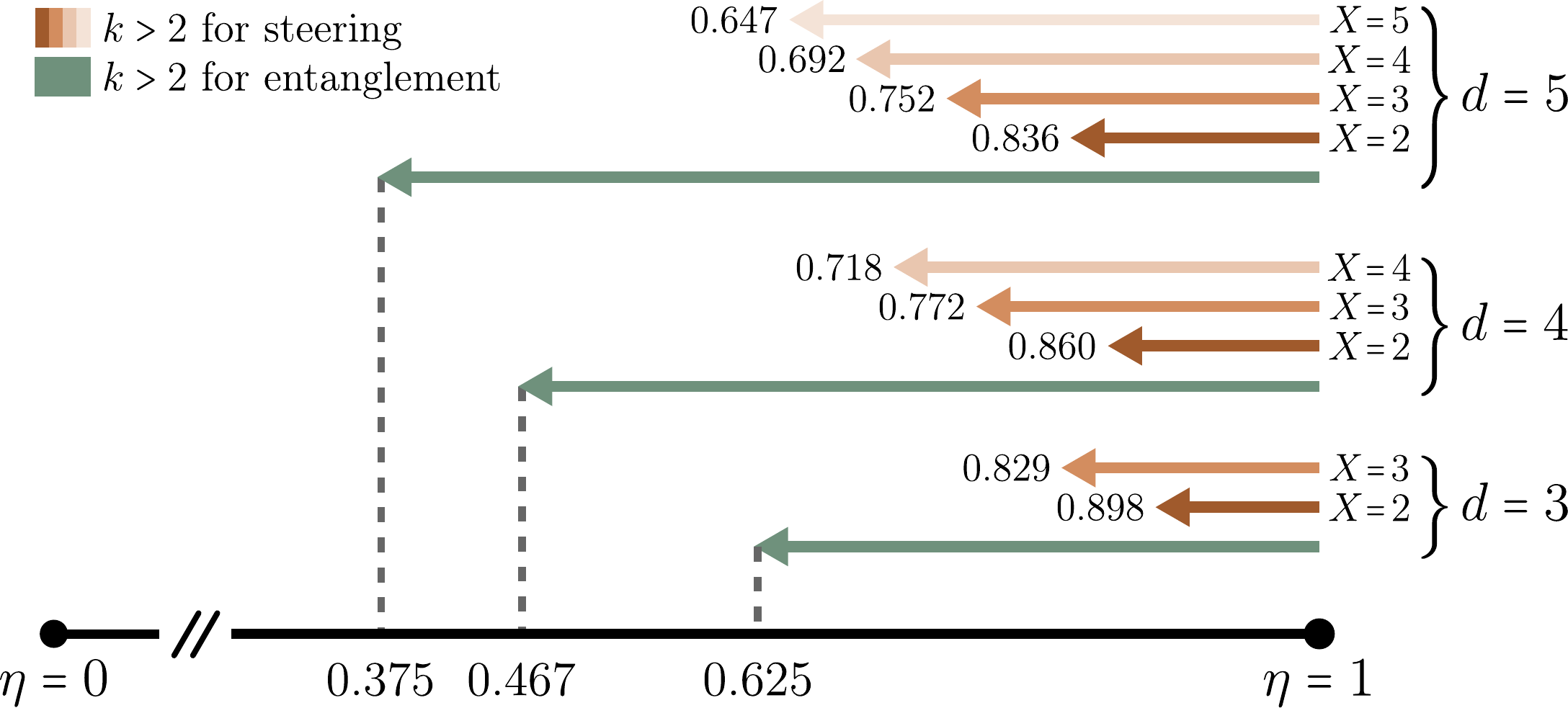}
        \caption{Schmidt number ($k$) certification for assemblages prepared with maximally entangled states and $X$ measurements on \textsc{mub}s. The assemblages were mixed with the white noise assemblage and an upper bound on the minimum visibility such that $2$-preparability can be falsified was computed. All results were computed with two copies of the $B^\prime B$ system (i.e., $N=2$). Increasing $N$ can lead to smaller bounds. For instance, in the $d=3, X=3$ case, by setting $N=3$ we were able to certify that $k > 2$ for all $\eta \geq 0.789$.
        For comparison, we also show analogous bounds for the entanglement scenario, obtained through a similar procedure: To certify the Schmidt number in an entanglement scenario, the standard symmetric extensions hierarchy can be directly applied to the observation illustrated in Fig.\@ \ref{fig:lifting} (See also \cite{weilenmann-faithful-entanglement}).}
        \label{fig:results}
    \end{figure}
    
    Even though the results shown in Fig.\@ \ref{fig:results} were computed for the lowest hierarchy level ($N=2$), they provide better bounds than the known values for high-dimensional steering witnesses with $X>2$ measurements \cite[Table III]{designolle2022robust}, except for the $d=5$, $X=3$ case. In comparison to the previous results for $X=2$ measurements \cite{designolle-highdim-steering}, level $N=2$ of our hierarchy provides slightly worse visibilities, but moving to level $N=3$ makes up for it. For example, while they find a visibility of $0.886$ ($0.838$) for dimension $3$ ($4$), our method improves it to $0.844$ ($0.811$).
    
    For particular applications, other symmetries of the assemblages can be explored. These can considerably reduce the computational cost of the procedure. One example is described in the Appendix. It can be used when the measurements in the assemblage preparation have unitary symmetry (as is the case for \textsc{mub}s), and effectively reduces the number of variables we must consider by a factor of $X$ (the number of measurements).

\paragraph{Robustness of high-dimensional steering}
    The optimal value $\eta^*$, as described above, can be interpreted as the distance between the assemblage and the set $\mathcal{A}_k$, measured along the line segment connecting $\mathbf{\sigma}$ to the white noise assemblage. A particularly interesting alternative choice of noise model to be considered is that of all $k$-preparable assemblages. Borrowing ideas from the entanglement and steering robustnesses \cite{vidal-robustness-entanglement,piani-steering-robustness}, we define the $k$-preparability robustness w.r.t.\@ noise model $\mathcal{N}$ as the real number $R_{\mathcal{N}}(\mathbf{\sigma}, k) = t^*$ resulting from the following program, in the limit $N \rightarrow \infty$.
    \begin{subequations}
    \begin{align}
        &\text{min } t \\
        &\text{s.t. } \nonumber \\
        &\quad \frac{\sigma_{a \vert x} + t \pi_{a \vert x}}{1 + t} = \sigma_{a \vert x}^k \label{eq:robustness-mixture} \\
        &\quad \Pi_k^\dag \Big( \ptrb{\substack{(BB^\prime)_i \,:\, i \neq m}}{\Xi_{a \mid x}} \Big) \Pi_k  = \sigma_{a \vert x}^k, \label{eq:robustness-srk-simulation}\\
        &\quad \sum_a \Xi_{a \mid x} = \sum_a \Xi_{a \mid x^\prime}, \,\forall x, x^\prime
    \end{align}
    \end{subequations}
    Since the minimisation ranges over $\Xi_{a, x}$, $t$ and the noise operators $\pi_{a \vert x} \in \mathcal{N}$, this is not a semidefinite program. Nevertheless, when $\mathcal{N}$ is any set defined by a linear matrix inequality (\textsc{lmi}), it can be turned into one, as shown in the Appendix.
    
    We have already shown that the set of Schmidt number $k$ assemblages ($\mathcal{A}_k$) can be approximated by means of a similar hierarchy of semidefinite programs. Thus, any choice $N$ for the hierarchy level will provide an upper bound $R_{\mathcal{N}}^N(\sigma, k) \geq R_{\mathcal{N}}(\sigma, k)$. By taking $\mathcal{N} = \mathcal{A}_k$, we can interpret $R_{\mathcal{A}_k}(\sigma, k)$ as the minimal distance between the assemblage $\sigma$ and the set of Schmidt number $k$ assemblages.

\paragraph{Conclusions}
We have provided a complete set of criteria for the certification of high-dimensional 
steering. The criteria are formulated as a sequence of semidefinite programs, such 
that any kind of high-dimensional steering will be detected at one point of the 
sequence. Moreover, we have demonstrated the practical implementation of our method
and it has turned out that it improves existing results
significantly. 

In the context of known results in quantum steering, our results have further 
consequences. To start with, steering is known to be in one-to-one correspondence 
with measurement incompatibility \cite{quintino2014joint,uola2015one}, a connection 
which was recently extended to high-dimensional steering and $k$-simulatability of 
measurements \cite{jones2022equivalence}, in the sense defined in \cite{ioannou2022simulability}. 
Thus our results also solve the problem of quantifying measurement incompatibility 
in terms of a dimension. On a more practical side, it is also possible to use 
our method to optimise high-dimensional steering witnesses (or its dual to obtain 
them), which might be a fruitful way of constructing tighter witnesses 
than possible with the current approaches \cite{designolle2022robust,qu2022retrieving}.
    
Although our hierarchy for Schmidt number certification can be stated by means 
of semidefinite programs, it leads to a formidable computational problem. Ultimately, 
the size of the problem is dictated by the number of measurements and outcomes, the 
tentative Schmidt number $k$ and the hierarchy level $N$. In particular, the Hilbert 
space dimension grows exponentially with $k$ and $N$. 
A natural further step is thus to find cheaper relaxations or to adapt the formulation 
to specific problems by making use of further symmetries, as done in 
Refs.~\cite{chau-symmetry-measurements,aguilar2018connections,tavakoli2019enabling,ioannou2021noncommutative}.

Another possible extension comes from the key elements in our proposal, which are the lifting 
of the state $\rho$ by means of the ancillary spaces $H_{A^\prime}$ and $H_{B^\prime}$ 
[cf.\@ Eq.~\eqref{eq:lifting-down}], and the reformulation of the symmetric extension 
criterion to the operators $\Omega_{a \vert x}$. In light of recent extensions of the symmetric extensions hierarchy for general cones \cite{aubrun2022monogamy}, it should be possible 
to extend the ideas herein presented to Bell nonlocality and other correlation scenarios.
    
\paragraph{Acknowledgments}
Many thanks to T.\@ Cope, H.\@ C.\@ Nguyen, and R.\@ Uola for useful discussions, 
and to the developers and maintainers of \textsc{Julia} \cite{bezanson2017julia}, \textsc{scs} \cite{ocpb:16} and \textsc{JuMP} \cite{DunningHuchetteLubin2017}. 
    We acknowledge support from the Deutsche Forschungsgemeinschaft (DFG, German     Research Foundation, project numbers 447948357 and 440958198), the Sino-German     Center for Research Promotion (Project M-0294), the ERC (Consolidator Grant 683107/TempoQ), and the German Ministry of Education and Research (Project QuKuK, BMBF Grant No. 16KIS1618K).
    C.G. acknowledges support from the House of Young Talents of the University of Siegen.
    M.P. acknowledges support from the Alexander von Humboldt Foundation.
    R.S. acknowledges financial support by the BMBF project ATIQ and the Quantum Valley Lower Saxony. 
    The \textsc{omni} cluster of the University of Siegen was used for the computations.

\appendix
\section{Appendix A: Proof of Theorem \ref{thm:symmetric-extensions}} 
\label{ap:proof-symmetric-extensions}
    \symmetricextensions*
    The ``if'' condition
    is discussed in the main text, so we now prove the other direction.
    
    \begin{proof}
    The result is an application of \cite[Theorem 1]{aubrun2022monogamy}.
    We will first shortly introduce the necessary notions and state the theorem, then we will show how to apply it to our case.
    
    Let $V$ be a real, finite-dimensional vector space. $C \subset V$ is a proper cone if $C$ is convex, closed, for every $x \in C$ and $\lambda \geq 0$ we have $\lambda x \in C$, $C \cap (-C) = \emptyset$ (i.e., pointed) and $C - C = V$ (generating). Let $V^*$ denote the dual of $V$, then we define the dual cone $C^*$ as the cone of all positive linear functionals, that is, $\psi \in C^* \subset V^*$ if and only if $\psi(x) \geq 0$ for all $x \in C$. We say that $\varphi \in V^*$ is a generalized trace on $C$ if $\varphi \in C^*$ and for all $x \in C$ we have that $\varphi(x) = 0$ only if $x = 0$. It is straightforward to see that positive semidefinite operators form a proper cone and that the usual trace of an operator is a generalized trace as defined above.
    
    Let $V_1, V_2$ be real, finite-dimensional vector spaces and let $C_1 \subset V_1$ and $C_2 \subset V_2$ be proper cones. We define the minimal tensor product as
    \begin{equation}
    C_1 \tmin C_2 = \text{conv}(\{ x_1 \otimes x_2: x_1 \in C_1, x_2 \in C_2 \}),
    \end{equation}
    where ``$\text{conv}$'' denotes the convex hull. The minimal tensor product can be understood as the set of all separable tensors of $C_1$ and $C_2$. Using the notion of dual cones we can also define the maximal tensor product as follows:
    \begin{equation}
    C_1 \tmax C_2 = (C_1^* \tmin C_2^*)^*.
    \end{equation}
    The maximal tensor product can be understood as the largest possible tensors that are positive on separable combinations of the duals. In quantum theory the maximal tensor product coincides with the set of entanglement witnesses.
    
    Let $\varphi_1$ be a generalized trace on $C_1$. Let $x \in C_1 \tmax C_2$ be such that, for any $N$, there is an $x_N \in C_1^{\tmax N} \tmax C_2$ such that:
    \begin{enumerate}
    \item\label{item:symmetric-extension-cond1} $x_N$ is symmetric, i.e., invariant with respect to permutations of the $N$ copies of $C_1$.
    \item\label{item:symmetric-extension-cond2} $(\varphi_1^{N-1} \otimes \text{id}_1 \otimes \text{id}_2)(x_N) = x$, where $\text{id}_1, \text{id}_2$ is the identity map on $V_1, V_2$ respectively. Here, $(\varphi_1^{N-1} \otimes \text{id}_1 \otimes \text{id}_2)(x_N)$ is a generalized partial trace of $x_N$ where we trace out the first $N-1$ copies of $C_1$.
    \end{enumerate}
    Theorem 1 in \cite{aubrun2022monogamy} then states that, under these conditions, we have $x \in C_1 \tmin C_2$. This result can be seen as a generalization of the \textsc{dps} hierarchy \cite{dps-hierarchy} to general cones $C_1$, $C_2$.
    
    We now only need to apply this result to our case. In our case, $C_1$ is the usual cone of positive semidefinite operators generated by quantum states, but $C_2$ is the cone generated by assemblages $\{\sigma_{a|x}\}$ with $a \in [A]$ and $x \in [X]$ where $A$ and $X$ are fixed. Thus, elements of $C_2$ are multiples of assemblages and have the form $\{\lambda \sigma_{a|x}\}$, where $\lambda \geq 0$. It is straightforward to check that $C_2$ is a closed, convex, pointed cone and by letting $V_1 = \text{span}(C_2)$ we get that $C_2$ is also a generating cone. Moreover, we choose the generalized trace on $C_2$ to be given as
    \begin{equation}
    \varphi_1(\{\sigma_{a|x}\}) = \sum_a \tr{ \sigma_{a|x} }.
    \end{equation}
    Notice that the value of the generalized trace does not depend on $x$, since every assemblage is nonsignaling. The dual cone $C_2^*$ is generated by functionals of the form
    \begin{equation}
    \Psi(\{\sigma_{a|x}\}) = \tr{ F \sigma_{a\vert x} }
    \end{equation}
    where $F \geq 0$.
    
    It is straightforward to check that $\Omega_{a|x} \in C_1 \tmax C_2$ simply because $\tr{ \Omega_{a|x} (E \otimes F) } \geq 0$ for all $F \geq 0$ and $E \geq 0$. The result now follows by observing that nonsignaling symmetric extensions of $\Omega_{a|x}$ of order $N$ exactly coincides with an element $x_N \in C_1^{\tmax N} \tmax C_2$ satisfying conditions \ref{item:symmetric-extension-cond1} and \ref{item:symmetric-extension-cond2} above. We thus must have
    \begin{equation}
    \Omega_{a|x} = \sum_i p_i \tau_{a|x}^i \otimes \rho^i
    \end{equation}
    where $\sum_i p_i = 1$, $\tau_{a|x}^i$ is set of assemblages indexed by $i$ and $\rho^i$ are quantum states. This finishes the proof.
    \end{proof}

\section{Appendix B: Measurement symmetries}
\label{ap:symmetries}

    Symmetries in the measurements can be used to further reduce the size of the \textsc{sdp} instances. Suppose, for example, that we prepare the assemblage with measurements on \textsc{mub}s. For simplicity, we focus on prime dimensions $d$, for which the sets of $d$ \textsc{mub}s can be obtained through rotations of the Fourier matrix $F_d$ (as defined in \cite{durt-mubs}, Eq.\@ 5.6), generated by powers of the Hadamard matrices $E_d$ (ibid., Eq.\@ B.1). As such, all of our measurements' effects $\{ M_{a \vert x} \}_{a,x}$ can be expressed from a generating set $\{ M_{a \vert 1} \}_a$ as
    \begin{equation}
        M_{a \vert x} = E_d^{(x-1)} M_{a \vert 1} \big( E_d^{(x-1)} \big)^\dag, \,\forall x \in \{1, \ldots, d\} .
    \end{equation}
    From this, the elements of the assemblage inherit that $\sigma_{a \vert x} = \big( E_d^{(x-1)} \big)^\dag \sigma_{a \vert 1} E_d^{(x-1)}$. Consequently, it is possible to rewrite the constraints as
    \begin{subequations}
    \begin{align}
        &\Pi_k^\dag \Big( \ptrb{\substack{(BB^\prime)_i \\: i \in [N] \setminus m}}{ {\mathcal{E}_d^{(x-1)}}^\dag \Xi_{a \mid 1} \big( \mathcal{E}_d^{(x-1)} \big)} \Big) \Pi_k  = \sigma_{a \vert x} \label{eq:simulation-constraint-with-hadamard}\\
        &\sum_a \Xi_{a \mid 1} = \sum_a {\mathcal{E}_d^{(x - 1)}}^\dag \Xi_{a \mid 1} \big( \mathcal{E}_d^{(x - 1)} \big), \,\forall x, x^\prime ,
    \end{align}
    \end{subequations}
    in which $\mathcal{E}_d = \eye_{A^\prime} \otimes \left( \eye_{B^\prime} \otimes E_d \right)^{\otimes N}$. In this form, the number of variables that we must account for is reduced by a factor of $X$ --- the number of possible measurements.
    
    For particular cases, this strategy may allow us to deal with settings that are at first prohibitively expensive.
    
\section{Appendix C: Steering robustness}
\label{ap:robustness}

 To cast the steering robustness into a semidefinite program, we rewrite Eq.~\eqref{eq:robustness-mixture} as
    \begin{equation}
        \pi_{a \vert x} = \frac{1}{t} \left[ \left( 1 + t \right) \sigma_{a \vert x}^k \right] - \sigma_{a \vert x} \in \mathcal{N} , 
    \end{equation}
    then redefine $\widetilde{\mathcal{N}} \equiv \left\{ t \pi_{a \vert x} \mid t \geq 0, \pi_{a \vert x} \in \mathcal{N} \right\}$ and $\widetilde{\Xi}_{a \vert x} \equiv (1 + t) \Xi_{a \vert x}$. Because $\widetilde{\mathcal{N}}$ is a scalar multiplication of a \textsc{lmi}, it is itself a \textsc{lmi}. And since $\tr{\sum_a \widetilde{\Xi}_{a \vert x}} = 1 + t$ for any choice of $x$, an equivalent expression for $R_{\mathcal{N}}(\sigma, k)$ is
    \nopagebreak\begin{align}
        &\text{min } \text{tr}\Big( \sum_a \widetilde{\Xi}_{a \vert x} \Big) - 1 \\
        &\text{s.t. } \nonumber \\
        &\quad \Pi_k^\dag \Big( \ptrb{\substack{(BB^\prime)_i \,:\, i \in [N] \setminus m}}{\widetilde{\Xi}_{a \mid x}} \Big) \Pi_k - \sigma_{a \vert x} \in \widetilde{\mathcal{N}} \\
        &\quad \sum_a \widetilde{\Xi}_{a \mid x} = \sum_a \widetilde{\Xi}_{a \mid x^\prime}, \,\forall x, x^\prime , 
    \end{align}
    where any $x$ may be chosen in the objective function.

%

\end{document}